\newtheorem{theorem}{Theorem}
\newtheorem{definition}[theorem]{Definition}
\newtheorem{proposition}[theorem]{Proposition}
\newenvironment{proof}[1][Proof]{\textbf{#1.} }{\ \rule{0.5em}{0.5em}}
\begin{document}

\title{Weak associativity and deformation quantization}

\vspace{1cm}
\author{V.G. Kupriyanov\thanks{vladislav.kupriyanov@gmail.com}\\
{\it CMCC-Universidade Federal do ABC, Santo Andr\'e, SP, Brazil}\\
{\it Tomsk State University, Tomsk, Russia}}

\maketitle

\begin{abstract}
Non-commutativity and non-associativity are quite natural in string theory. For open strings it appear due to the presence of non-vanishing background two-form in the world volume of Dirichlet brane, while in closed string theory the flux compactifications with non-vanishing three-form also lead to non-geometric backgrounds. In this paper, working in the framework of deformation quantization, we study the violation of associativity imposing the condition that the associator of three elements should vanish whenever each two of them are equal. The corresponding star products are called alternative and satisfy an important for physical applications properties like the Moufang identities, alternative identities, Artin's theorem, etc. The condition of alternativity is invariant under the gauge transformations, just like it happens in the associative case. The price to pay is the restriction on the non-associative algebra which can be represented by the alternative star product, it should satisfy the Malcev identity. The example of nontrivial Malcev algebra is the algebra of imaginary octonions. For this case we construct an explicit expression of the non-associative and alternative star product. We also discuss the quantization of Malcev-Poisson algebras of general form, study its properties and provide the lower order expression for the alternative star product. To conclude we define the integration on the algebra of the alternative star products and show that the integrated associator vanishes.
\end{abstract}

\newpage
\section{Introduction}\label{sec-intro}

In the canonical formulation of quantum mechanics the physical observables are represented by the hermitian linear operators $\hat f$ acting on the Hilbert space. The composition of quantum mechanical operators in general is non-commutative, $\hat f\hat g\neq\hat g\hat f$, yielding the uncertainty principle, but should be necessarilly associative, $\hat f(\hat g\hat h)=(\hat f\hat g)\hat h$. The later property implies two important identities involving the commutator of operators, the Leibniz rule, $[\hat f\hat g,\hat h]=[\hat f,\hat h]\hat g+\hat f[\hat g,\hat h]$, and the Jacobi Identity,
\begin{equation}\label{i1}
 [\hat f,[\hat g,\hat h]]+[\hat h,[\hat f,\hat g]]+[\hat g,[\hat h, \hat f]]\equiv0.
 \end{equation}
The time evolution in the Heisenberg picture is postulated by the operator equation
\begin{equation}\label{i2}
 i\hbar\frac{d}{dt}\hat f=i\hbar \frac{\partial\hat f}{\partial t}+[\hat H,\hat f],
 \end{equation}
 where $\hat H$ stands for the hamiltonian operator.
The Leibniz rule and the Jacobi identity guaranty the consistency of the quantum dynamics, meaning that the time evolution will preserve the algebra of physical observables. In particular, if $\hat f$ and $\hat g$ are two integrals of motion, $[\hat H,\hat f]=[\hat H,\hat g]=0$, the commutator $[\hat f,\hat g]$ is also an integral of motion, 
\begin{equation}\label{i3}
 i\hbar\frac{d}{dt}[\hat f,\hat g]=[\hat H,[\hat f,\hat g]]=0,
 \end{equation}
due to (\ref{i1}) and (\ref{i2}). So, the associativity is essential for the consistency of the canonical quantum mechanics.

However, some quantum mechanical systems are formulated in terms of non-associative algebras of the canonical operators. The standard exemple of such a situation is the introduction of the magnetic charges through the commutator of the covariant momenta, $[\hat x^i,\hat x^j]=0$, $[\hat x^i,\hat \pi_j]=i\delta^i_j$ and $[\hat \pi_i,\hat \pi_j]=ie\varepsilon_{ijk}B^k(\hat x)$, with $div \vec B\neq0$, see e.g., \cite{Jackiw, BaLu} for more details. For the Dirac monopole, in particular, one has $\vec B(\vec x)=g\vec x/ x^3$, with $g$ being the magnetic charge. The Jacobi identity is violated only in one point (the position of the magnetic charge). To overturn this difficulty one may impose the appropriate boundary condition for the wave function leading to the famous Dirac quantization rule: $eg/2\pi\hbar \in \mathbb{Z}$. For the linear magnetic field, $\vec B= g\vec x/3$, the jacobiator is constant, meaning that one cannot repeat the same trick to fix the problem.

Another source of the examples of non-associative structures is the string theory. Recent advances in understanding flux compactifications of string theory have suggested that non-geometric frames are related to non-commutative and non-associative deformations of space-time geometry \cite{Lust1,Lust2,Lust3,MSS2,Andriot}. Since, these flux deformations of geometry are probed by closed strings, they have a much better potential for providing an effective target space description of quantum gravity than previous appearances of non-commutative geometry in string theory.
To give an example of arising a non-geometric background let us consider the  closed strings propagating in a three-torus endowed with a constant Neveu-Schwarz flux $H = dB$. Applying consecutive $T$-duality transformations along all three directions one obtains the relation between geometric and non-geometric fluxes: $H\rightarrow f\rightarrow Q\rightarrow R$. The $Q$-flux background is non-commutative but associative, while the purely non-geometric $R$-flux background is not only non-commutative, but also non-associative. The presence of non-vanishing three-form $H$-flux in string compactifications makes the closed strings coordinates non-commutative and non-associative in the analogy with the coordinates of the open string endpoints attached to a $D$-brane in a background $B$-field \cite{ChuHo,Schom,CoSch,MK1,MK2}.

Mainly motivated by the string theory arguments there is a growing interest to the theories on non-geometric backgrounds. In particular, the non-associative quantum mechanics was studied in \cite{MSS1,Bojowald1,Bojowald2}. For recent developments in the area of non-associative field theory and non-associative gravity one may see \cite{GAS1,GAS2,Fuchs} and references therein.

The aim of this paper is to study the violation of the associativity in the framework of deformation quantization \cite{BFFLS}. In this approach to quantum mechanics the physical observables are represented by the functions on smooth manifold, $f\in C^{\infty}(\mathcal{M})$. To reflect the non-commutative nature of the composition low of quantum observables the ordinary point-wise multiplication of functions is substituted by the star multiplication, $f\cdot g\rightarrow f\star g, $ satisfying some natural restrictions described in the Sec. 2 and 3. In particular, star products representing different quantization prescriptions of the same classical system should be related by the gauge transformation, see Sec. 3 for details. 

To define the non-associative star products we are looking for the condition which would be gauge invariant and include the associative star products as a particular case. We imply the requirement that the star associator of any three elements should vanish whenever two of them are equal. The star product satisfying this condition is called alternative. Any associative multiplication is automatically alternative. We show that the proposed condition is invariant under the gauge transformations in a sense of a deformation quantization.

The alternative multiplications enjoy an important properties, like the Moufang identities, which can be used in physics. In particular, in non-associative quantum mechanics one may use the Moufang identities for the definition of states and uncertainty relations \cite{Bojowald1}. On the other hand the Moufang identities for the star product the star imply the Malcev identity on the commutator restricting the non-associative algebras which can be treated with the help of the alternative star product, see the discussion in the Sec. 4. 

On the classical level, the Malcev identity is equivalent to the Malcev-Poisson identity for the corresponding bracket. In Sec. 5 we discuss the classical dynamics on the Malcev-Poisson manifold. We introduce the modified bracket and show that there is a weaker analogue of the classical Poisson theorem, the modified bracket $\{f,g\}_H$ of two integrals of motion $f$ and $g$ is an integral of motion once one of them is manifestly time independent, $\partial f/\partial t=0$. So, the Malcev-Poisson identity can be useful to construct the integrals of motion.

The non-trivial example of the Malcev algebra is the algebra of the imaginary octonions, described in the Sec. 6. For this algebra we obtain an explicit formula for the alternative and non-associative star product. In the Sec. 7, we discuss the deformation quantization of the Malcev-Poisson structures of the general form. First, following our previous work \cite{starpr} we describe the construction of non-associative weakly-Hermitian Weyl star product. Then we prove some important properties of the alternative Weyl star products and deriva the lower order formula for it. Finally, in the Sec. 8, we discuss the definition of the trace functional (integration) on the algebra of the alternative star product and show that the integrated associator vanishes. 

\section{Alternative star product}

Star product is an important mathematical tool in theoretical physics with inumerous applications. Originally it appeared in the context of deformation quantization approach to quantum mechanics (QM) \cite{BFFLS} and is defined as a formal deformation of the usual point-wise multiplication of smooth functions on some manifold $\mathcal{M}$,
\begin{equation}
f\star g = f \cdot g +\sum_{r=1}^\infty (i\alpha)^r C_r(f,g) \,,\label{star}
\end{equation}
where $\alpha$ is a deformation parameter and $C_r$ denote some bi-linear bi-differential operators.  This product
provides a quantization of the bracket, 
\begin{equation}\label{m1}
   \{f,g\}=P^{ij}(x)\partial_i f\partial_j g,
\end{equation}
with $P^{ij}(x)$ being an antisymmetric bi-vector, if
\begin{equation}
 \left. \frac{f\star g-g \star f}{2i\alpha}\right|_{\alpha=0}= \{ f,g\}\,.\label{quant}
\end{equation}
Let us define the associator of three functions as
\begin{equation}
A(f,g,h)=f\star (g\star h)-(f\star g)\star h \,.\label{assoc}
\end{equation}
In the classical setting, the associativity of star products reflects the associativity of compositions of quantum mechanical operators and can be imposed by the equation,
\begin{equation}
A(f,g,h)=0 \,.\label{assoc1}
\end{equation}
Already in the second order, $\mathcal{O}(\alpha^2)$, this condition implies the Jacobi identity (JI) on the bi-vector field $P^{ij}(x)$,
\begin{eqnarray}\label{JI}
&&\{f,g,h\}=0,\\
&&\{f,g,h\}=\{f,\{g,h\}\}+\{h,\{f,g\}\}+\{g,\{h,f\}\},\nonumber
\end{eqnarray}
which is the consistency condition in this case. Also, (\ref{assoc}) allows to proceed to higher orders, $C_r(f,g)$, $r>1$. It happens that no other restrictions appear and on any Poisson manifold there exists an associative star product \cite{Kontsevich}.

Suppose now that the JI is violated, $\{f,g,h\}\neq0,$ and that the star product is no longer associative. However, let us keep the violation of the associativity under control.

\begin{definition} The star product is alternative if for any $f$ and $g$, one has,
\begin{eqnarray}
f\star(f\star g)=(f\star f)\star g,\label{a1} \\ f\star(g\star g)=(f\star g)\star g. \label{a2}
\end{eqnarray}
\end{definition}
Substituting in the above formulae $f$ by $f+h$ one obtains,
\begin{eqnarray*}
A(f,h,g)+A(h,f,g)=0, \\A(g,f,h)+A(g,h,f)=0, \end{eqnarray*}
which means that the associator $A(f,g,h)$ must be totally antisymmetric, i.e., is an alternating function of its arguments. That is why such an algebra is called alternative.
From, $A(f,g,f)=0$, follows the flexible identity,
\begin{equation}\label{a3}
 (f\star g)\star f=f\star(g\star f).
\end{equation}
The identities (\ref{a1}) and (\ref{a2}) are known as the left and the right alternative identities correspondingly. Each two of three identities (\ref{a1}-\ref{a3}) imply the third one. 

The star Jacobiator,
\begin{equation}\label{jac}
  [f,g,h]:=[f,[g,h]]+[h,[f,g]]+[g,[h,f]],
\end{equation}
where $[f,g]=f\star g-g\star f$, is antisymmetric by the definition, and can be written in terms of the associator (\ref{assoc}) as follows,
\begin{equation}\label{jass}
    [f,g,h]=A(f,g,h)-A(f,h,g)+A(h,f,g)-A(h,g,f)+A(g,h,f)-A(g,f,h).
\end{equation}
Since, for the alternative star product the associator should be totally antisymmetric, one finds
\begin{equation}\label{alt}
A(f,g,h)=  \frac{1}{6}  [f,g,h],\,\,\,\,\,\forall f,g,h.
\end{equation}
That is, the star product is alternative {\it if and only if} the relation (\ref{alt}) holds.

\section{Gauge invariance}

A milestone result in deformation quantization is the Formality Theorem by Kontsevich \cite{Kontsevich} that demonstrates the existence of associative star products and states that each two star products $\star$ and $\star^\prime$ corresponding to the same Poisson (satisfying JI) bi-vector $P^{ij}(x)$ are related by the gauge transformation
\begin{equation}
f\star^\prime g=\mathcal{D}^{-1}\left( \mathcal{D}f\star   \mathcal{D}g \right), \,\,\,\,\,\,\,\mathcal{D}=1+\mathcal{O}(\alpha)\,.\label{gauge}
\end{equation}
Gauge equivalence classes of star products are classified by Poisson structures on $\mathcal{M}$. Different star products providing quantization of the same bracket reflect different quantization prescriptions of the same classical model. Physical consequences of each choice of the quantization scheme may be different, however the basic principles of the consistency must be the same. That is why, it is important that the condition restricting the higher order terms of the star product (\ref{star}) should distinguish the whole class of admissible quantizations. 
 
 Suppose the star products $\star$ and $\star^\prime$ are related by the gauge transformation (\ref{gauge}). Let us calculate
\begin{eqnarray*}
A^\prime(f,g,h)&=&f\star^\prime (g\star^\prime h)-(f\star^\prime g)\star^\prime h\\
&=&\mathcal{D}^{-1}\left\{ \mathcal{D}f\star   \mathcal{D}\left[\mathcal{D}^{-1}\left( \mathcal{D}g\star   \mathcal{D}h \right)\right]\right\}-\mathcal{D}^{-1}\left\{ \mathcal{D}\left[\mathcal{D}^{-1}\left( \mathcal{D}f\star   \mathcal{D}g \right)\right]\star   \mathcal{D}h\right\}\,.\nonumber
\end{eqnarray*}
Since, the gauge operator $\mathcal{D}$ is linear, one finds
\begin{equation}
A^\prime(f,g,h)=\mathcal{D}^{-1}A (\mathcal{D}f,\mathcal{D}g,\mathcal{D}h) \,.\label{assoc2}
\end{equation}
For the Jacobiators, calculated with respect to the $\star$ and $\star^\prime$, one finds the similar relation,
\begin{equation}
 [f,g,h]^\prime=\mathcal{D}^{-1} [\mathcal{D}f,\mathcal{D}g,\mathcal{D}h] \,.\label{jac2}
\end{equation}
The combination of (\ref{assoc2}) and (\ref{jac2}) gives
\begin{equation}\label{alt1}
   [f,g,h]^\prime-6A^\prime(f,g,h)= \mathcal{D}^{-1}\left[ [\mathcal{D}f,\mathcal{D}g,\mathcal{D}h]-6A(\mathcal{D}f,\mathcal{D}g,\mathcal{D}h)\right]=0,
\end{equation}
because of the equation (\ref{alt}) and the invertibility of the operator $\mathcal{D}=1+\mathcal{O}(\alpha)$. This means that the condition (\ref{alt}) is invariant under the gauge transformations (\ref{gauge}). If the star product $\star$ is alternative, then any $\star^\prime$ gauge equivalent to $\star$ is also alternative. The situation is similar to the associative case. An important question here is whether exist a gauge transformation relating different alternative star products corresponding to the same antisymmetric bi-vector $P^{ij}(x)$, if not then how to classify gauge equivalence classes of alternative star products.

\section{Malcev identity} 

The first point we would like to address here is whenever the alternative star product exists for each non-Poisson bracket (\ref{quant}), or the equation (\ref{alt}) implies some restrictions on the bi-vector $P^{ij}(x)$. 
In every alternative algebra the following identities are valid, see \cite{Ivan-book}:
\begin{eqnarray}
f\star(h\star g\star h)=((f\star h)\star g)\star h, \\ (h\star f\star h)\star g=h\star (f\star(h\star g)), \nonumber\\ (h\star g)\star(f\star h)=h\star (g\star f)\star h, \nonumber
\end{eqnarray}
known as the right, the left and the middle Moufang identities. The left and the right Moufang identities are equivalent respectively to
\begin{eqnarray}
A(h,f,h\star g)=A(h,f,g)\star h,\label{mo1} \\ A(h,f,g\star h)=h\star A(h,f,g).\label{mo2}
\end{eqnarray}
Indeed,
\begin{eqnarray*}
&&A(f,h\star g,h)+A(f,h,g)\star h=f\star(h\star g\star h)-((f\star h)\star g)\star h, \\&& A(h,g\star h,f)+h\star A(g,h,f)=-(h\star g \star h)\star f+h\star (g\star(h\star f)).
\end{eqnarray*}
Now subtracting (\ref{mo2}) from (\ref{mo1}) and using (\ref{alt}) we end up with the Malcev identity:
\begin{equation}\label{Malcev}
[h,f,[h,g]]=[[h,f,g],h].
\end{equation}
That is, if the star product is alternative, the star commutator should necessarily satisfy the Malcev identity (\ref{Malcev}). In the same manner as in the associative case the equation (\ref{assoc1}) implies the JI on the classical Poisson bracket, $\{f,g\}_{Pb}$, the requirement (\ref{alt}) restricts the classical bracket, (\ref{m1}). For any three functions $f$, $g$ and $h$ it should satisfy,
\begin{equation}\label{m3}
  \{h,f,\{h,g\}\}=\{\{h,f,g\},h\}.
\end{equation} 
The anti-symmetric bilinear bracket, satisfying (\ref{m3}) and the Leibniz rule is known as the Malcev-Poisson bracket \cite{Ivan}.

Any Poisson bracket, $\{f,g\}_{Pb}$, obviously satisfies the identity (\ref{m3}). The example of brackets for which the JI is violated while (\ref{m3}) holds is discussed in the Sec. 6, it is the bracket representing the algebra of imaginary octonions. To give an example of bracket for which the Malcev-Poisson identity (\ref{m3}) is violated let us consider the algebra corresponding to the Dirac monopole. One may see that in this case the Malcev-Poisson identity is violated already on the coordinate functions $x^i$, $x^j$, and $x^k$, see \cite{Gunaydin}. 

If $A$ is an alternative algebra and $ab$ is a multiplication in $A$, then the commutator, $[a,b]=ab-ba$, $ a,b\in A$, is antisymmetric and satisfies the Malcev identity, thus it defines the Malcev algebra $M$. This relation between an alternative algebra and a Malcev algebra is well known in mathematics. Whenever the inverse statement is true, i.e., if for each Malcev algebra $M$ there exist a corresponding alternative algebra $A$, such that the multiplication $[a,b]$ in $M$ can be represented as a commutator $ab-ba$ in $A$, is an open problem. In the context of deformation quantization this problem can be formulated as the existence of the alternative star product representing quantization of a given Malcev-Poisson bracket. 

\section{Dynamics on Malcev-Poisson manifolds}

In the classical mechanics the quantum commutator $[\hat f,\hat g]$ corresponds to the Poisson bracket $\{f,g\}_{Pb}$ which is antisymmetric, bi-linear and should satisfy the Leibniz rule and the Jacobi identity. The Poisson bracket gives rise to a Hamilton's equations of motion,
\begin{equation*}
\frac{df}{dt} = \frac{\partial f}{\partial t}+\{ H, f\}_{Pb},
 \end{equation*}
with $H$ being a Hamiltonian function. The importance of the Jacobi Identity for the Poisson bracket is that together with the Leibniz rule it guaranties that the corresponding time evolution preserves the algebra (with respect to the Poisson bracket) of the physical observables,
\begin{equation*}
\frac{d}{dt} \{f,g\}_{pb}=  \{\dot f,g\}_{Pb}+\{f,\dot g\}_{Pb}.
\end{equation*}
 A Poisson bracket of two integrals of motion is again an integral of motion. This statement is also known as a classical Poisson theorem.

However, in some cases the classical dynamics of the physical system is formulated in terms of non-hamiltonian equations of motion involving the bracket (\ref{m1}) which does not satisfy the Jacobi identity. The problem here is that the observables do not form an algebra with respect to the new bracket. The new bracket of two integrals of motion is no longer an integral of motion. Our aim here is to study to which end the Malcev identity (\ref{m3}) can be useful in the investigation of the classical dynamics of the system. In particular, in the construction of the integrals of motion.

The time evolution governed by the hamiltonian $H$ is defined by the equations
\begin{equation}\label{m5}
 \frac{df}{dt}=\frac{\partial f}{\partial t}+\{H,f\}.
\end{equation}
Let us suppose that the functions $f(x)$ and $g(x)$ do not depend manifestly on time and are integrals of motion, i.e.,
\begin{equation}\label{m7}
 \{f,H\}=\{g,H\}=0.
\end{equation}
If the Jacobian $\{H,f,g\}=0$, then just like in the standard Poisson case the bracket $\{f,g\}$, is an integral of motion,
\begin{equation}\label{m5a}
 \frac{d}{dt}\{f,g\}=\{H,\{f,g\}\}=\{H,f,g\}-\{g,\{H,f\}\}-\{f,\{g,H\}\}=0.
\end{equation}
If the Jacobian $\{H,f,g\}\neq0$, then due to the Malcev identity it is an integral of motion,
\begin{equation}\label{m5a}
 \frac{d}{dt}\{H,f,g\}=\{H,\{H,f,g\}\}=-\{H,f,\{H,g\}\}=0.
\end{equation}
Moreover, the quantity
\begin{equation}\label{m4}
  \{f,g\}_H=\{f,g\}-\{H,f,g\}\ t,
\end{equation}
where $t$ stands for the time variable, $\{t,f\}=0$ is also an integral of motion,
\begin{eqnarray}\label{m8}
 \frac{d}{dt} \{f,g\}_H&=&\frac{\partial }{\partial t}\{f,g\}_H+\{H,\{f,g\}_H\}\\&=&-\{H,f,g\}+\{H,\{f,g\}\}-\{H,\{H,f,g\}\}t \nonumber\\
 &=&\{H,f,\{H,g\}\}t=0.\nonumber
\end{eqnarray}
Since, $\{H,f,g\}$ does not depend manifestly on time, while $ \{f,g\}_H$ is linear in time by definition, these two quantities are functionally independent. In principle, it may be helpful to study the classical dynamics.

In general, one may see that
\begin{equation}\label{brk}
\frac{d}{dt} \{f,g\}_H=  \{\dot f,g\}_H+\{f,\dot g\}_H-\{H,g,\{H,f\}\}t,
\end{equation}
So, the modified bracket $ \{f,g\}_H$ of two integrals of motion $f$ or $g$ is an integral of motion only if one of them is manifestly time independent, i.e., $\{H,f\}=-\partial f/\partial t=0$.

\section{Star product for octonions}

 Probably, the most known example of the alternative, but non-associative algebra are the octonions. Every octonion $\mathcal{X}$ can be written in the form
\begin{equation}\label{oct}
\mathcal{X}=\mathcal{X}_0e_0+\mathcal{X}_1e_1+\mathcal{X}_2e_2+\mathcal{X}_3e_3+\mathcal{X}_4e_4+\mathcal{X}_5e_5+\mathcal{X}_6e_6+\mathcal{X}_7e_7,
\end{equation}
where $\mathcal{X}_i$ are the real coefficients, $e_0$ is the scalar element, and the imaginary unit octonions $e_i$ satisfy the following multiplication rule:
\begin{equation}\label{oct1}
e_ie_j=-\delta_{ij}e_0+\varepsilon_{ijk}e_k,
\end{equation}
with $\varepsilon_{ijk}$ being a completely antisymmetric tensor of the rank three with a positive value $+1$ when $ijk = 123,$ $145,$ $176,$ $246,$ $257,$ $347,$ $365$, and $i,j,k=1\dots7$. The octonions are neither commutative, nor associative. The commutator algebra of the octonions is given by
\begin{equation}\label{oct2}
[e_i,e_j]:=e_ie_j-e_je_i=2\varepsilon_{ijk}e_k.
\end{equation}
In seven dimensions one has
\begin{equation}\label{epsilon7}
\varepsilon_{ijk}\varepsilon_{lmk}=\delta_{il}\delta_{jm}-\delta_{im}\delta_{jl}+\varepsilon_{ijlm},
\end{equation}
where $\varepsilon_{ijlm}$ is a completely antisymmetric tensor of the rank four with a positive value $+1$ when $ijlm = 1247,$ $1265,$ $1436,$ etc., One may also represent the tensor with four indices $\varepsilon_{ijlm}$ as a dual to the tensor with three indices $\varepsilon_{kpr}$ through 
\begin{equation}\label{epsilon8}
\varepsilon_{ijlm}=\frac{1}{6}\varepsilon_{ijlmkpr}\varepsilon_{kpr},
\end{equation}
where $\varepsilon_{ijlmkpr}$ is the Levi-Civita symbol in seven dimensions, normalized as $\varepsilon_{1234567}=+1$. Taking into account (\ref{epsilon7}), for the Jacobian we get
\begin{equation}\label{oct3}
[e_i,[e_j,e_k]]+cycl(ijk)=6\varepsilon_{ijkl}e_l.
\end{equation}
In this section we are interested in the quantization of the bracket:
\begin{equation}\label{oct4}
\{x^i,x^j\}=\varepsilon^{ijk}x^k, \,\,\,\,\,i,j,k=1\dots7,
\end{equation}
representing the algebra of imaginary octonions.

\subsection{Vector star product} Consider the following multiplication map, for each two vectors $\vec p_1,\vec p_2$ from the unit ball, $p_ip^i\leq1$, it is assigned the vector $\vec p_1\circledast\vec p_2$ by the rule,
\begin{equation}\label{vstar}
\vec p_1\circledast\vec p_2=\varepsilon(\vec p_1,\vec p_2)\left({\sqrt{1-| \vec p_1|^2} }\,\,\vec p_2+\sqrt{1-| \vec p_2|^2}\,\,\vec p_1-\vec p_1\times\vec p_2\right),
\end{equation}
    where $| \vec p|=\sqrt{ \vec p\cdot \vec p}$ is the Euclidean vector norm, $\vec p_1\cdot \vec p_2$, $\vec p_1\times\vec p_2$ stand for the dot and the cross products correspondingly 
    and $\varepsilon(\vec p_1,\vec p_2)=\pm1$, is the sign of the expression $\sqrt{1-|\vec  p_1|^2}\sqrt{1-|\vec p_2|^2} -\vec p_1\cdot\vec p_2$. Since,
    \begin{equation*}
1-|\vec p_1\circledast\vec p_2|^2=\left(\sqrt{1-|\vec  p_1|^2}\sqrt{1-|\vec p_2|^2}-\vec p_1\cdot\vec p_2\right)^2\geq0,
\end{equation*}
the resulting vector $\vec p_1\circledast\vec p_2$ also belongs to the unit ball. This multiplication is non-commutative, but as we will see below is alternative.

Taking into account the property of the mixed product, $\vec p_1\cdot(\vec p_2\times\vec p_3)=\vec p_3\cdot(\vec p_1\times\vec p_2)$, one may fulfill the straitfoward calculation and see that,
 \begin{equation}\label{sign}
\varepsilon(\vec p_1,\vec p_2)\varepsilon(\vec p_1\circledast\vec p_2,\vec p_3)=\varepsilon(\vec p_1,\vec p_2\circledast\vec p_3)\varepsilon(\vec p_2,\vec p_3).
\end{equation}
With the help of the above formula for the associator of three vectors one finds
\begin{eqnarray}\label{assvstar}
&&A(\vec p_1,\vec p_2,\vec p_3)=\left(\vec p_1\circledast\vec p_2\right)\circledast\vec p_3-\vec p_1\circledast\left(\vec p_2\circledast\vec p_3\right)\\
&&=\varepsilon(\vec p_1,\vec p_2)\varepsilon(\vec p_1\circledast\vec p_2,\vec p_3)\left[\left(\vec p_1\times\vec p_2\right)\times\vec p_3-\left(\vec p_1\cdot\vec p_2\right)\vec p_3-\vec p_1\times\left(\vec p_2\times\vec p_3\right)+\left(\vec p_2\cdot\vec p_3\right)\vec p_1\right].\nonumber
\end{eqnarray}
The coordinates of the vector $\vec p_1\times\vec p_2$ can be written as
\begin{equation}
(\vec p_1\times\vec p_2)_i=\varepsilon_{ijk}p_1^jp_2^k,
\end{equation}
where $\varepsilon_{ijk}$ is defined after the eq. (\ref{oct1}). Taking into account (\ref{epsilon7}) one obtains for the associator (\ref{assvstar}) written in the components,
\begin{equation}
A(\vec p_1,\vec p_2,\vec p_3)_i=\pm2\varepsilon_{ijlm}p_1^jp_2^lp_3^m.
\end{equation}
It is different from zero, but is totally antisymmetric, meaning that in seven dimensions the vector star product (\ref{vstar}) is non-associative, but alternative.

To generalize the vector star product (\ref{vstar}) for the whole Euclidean space $V$ we introduce the map
\begin{equation}
 \vec p=\frac{\sin\left(\frac{\alpha}{2}|\vec k|\right)}{|\vec k|}\vec k,\,\,\,\,k^i\in\mathbb{R}.
\end{equation}
The inverse map is given by
\begin{equation}
 \vec k=\frac{2\arcsin\left|\vec p\right|}{\alpha\left|\vec p\right|}\vec p.
\end{equation}
For each two vectors $\vec k_1,\vec k_2\in V$ we set:
\begin{equation}
\label{Bk}
\mathcal{ B}(\vec k_1,\vec k_2)=\left.\frac{2\arcsin\left|\vec p_1\circledast\vec p_2\right|}{\alpha\left|\vec p_1\circledast\vec p_2\right|}\vec p_1\circledast\vec p_2\right|_{ \vec p_a=\vec k_a\sin\left(\frac{\alpha}{2}|k_a|\right)/|k_a|},\,\,\,a=1,2.
\end{equation}
From the definition (\ref{Bk}) one immediately finds the following properties:
\begin{description}
\item[i)]  $\mathcal{ B}(\vec k_1,\vec k_2)=-\mathcal{ B}(-\vec k_2,-\vec k_1)$;
\item[ii)] $\mathcal{ B}(\vec k,0)=\mathcal{ B}(0,\vec k)=\vec k$;
\item[iii)]  $\mathcal{ B}(\vec k_1,\vec k_2)=\vec k_1+\vec k_2-{\alpha}\vec k_1\times\vec k_2+\dots$;
\item[iv)] the associator, $$\mathcal{A}(k_1,k_2,k_3)=\mathcal{ B}(\mathcal{ B}(\vec k_1,\vec k_2),\vec k_3)-\mathcal{ B}(\vec k_1,\mathcal{ B}(\vec k_2,\vec k_3)),$$ is antisymmetric in all arguments.
\end{description}
The derivation of the vector star product, $\vec p_1\circledast\vec p_2$, and the expression for $\mathcal{ B}(\vec k_1,\vec k_2)$ in three space dimensions and its relation with BCH formula for $SU(2)$ can be found in \cite{FL,FM,OR13}.

\subsection{Star product} Let us define the star product as
\begin{equation}\label{w1}
f\star g(x) =\int \frac{d^{7}k_1}{\left( 2\pi
\right) ^{7}} \frac{d^{7}k_2}{\left( 2\pi
\right) ^{7}}\tilde{f}\left( k_1\right)\tilde{g}\left( k_2\right)e^{i\mathcal{ B}(\vec k_1,\vec k_2)\cdot{\vec x}},
\end{equation}
where $\tilde{f}$ stands for the Fourier transform of the function $f$. Due to the properties of the vector multiplication $\mathcal{ B}(\vec k_1,\vec k_2)$, the introduced star product is hermitian,
\begin{equation}\label{41}
    (f\star g)^\ast=g^\ast \star f^\ast,
\end{equation}
satisfies the stability of the unity, i.e., $f\star 1=1\star f=f(x)$. It provides the quantization of the bracket (\ref{oct4}),
\begin{equation}\label{41}
       \left. \frac{f\star g-g \star f}{2i\alpha}\right|_{\alpha=0}=x^i\varepsilon_{ijk}\partial_jf\partial_kg.
\end{equation}
The property ($\bf{iv}$) implies that the star product (\ref{w1}) is alternative. Consequently, the star commutator $[f,g]$ satisfy the Malcev identity (\ref{Malcev}).

Let us calculate $x^i\star f$ using (\ref{w1}). Since the Fourier transform of $x^i$ is the derivative of a Dirac delta function, we get
\begin{eqnarray}
x^i\star f &=&\int \frac{d^{7}k_1}{\left( 2\pi
\right) ^{7}} \frac{d^{7}k_2}{\left( 2\pi
\right) ^{7}}(2\pi i )^7\left(\partial^i_{k_1}\delta(k_1)\right)\tilde{f}\left( k_2\right)e^{i\mathcal{ B}(\vec k_1,\vec k_2)\cdot{\vec x}}\\
&=&-\int \frac{d^{7}k_1}{\left( 2\pi
\right) ^{7}} d^{7}k_2 x^l\frac{ \partial  \mathcal{ B}_l(\vec k_1,\vec k_2)}{\partial k_1^i}\delta(k_1)\tilde{f}\left( k_2\right)e^{i\mathcal{ B}(\vec k_1,\vec k_2)\cdot{\vec x}}\nonumber\\
&=&-\int \frac{d^{7}k_2}{\left( 2\pi
\right) ^{7}}  x^l\left.\frac{ \partial  \mathcal{ B}_l(\vec k_1,\vec k_2)}{\partial k_1^i}\right|_{k_1=0}\tilde{f}\left( k_2\right)e^{i\mathcal{ B}(0,\vec k_2)\cdot{\vec x}}\nonumber
\end{eqnarray}
After some algebra one may see that
\begin{eqnarray*}
&&\left.\frac{ \partial \mathcal{ B}_l(\vec k_1,\vec k_2)}{\partial k_1^i}\right|_{k_1=0}=\\
&&-\alpha \varepsilon^{ilm}k_2^m+\delta^{il}\frac{\alpha}{2}|k_2|\cot\left(\frac{\alpha}{2}|k_2|\right)
+\frac{k_2^ik_2^l}{|k_2|^2}\left(\frac{\alpha}{2}|k_2|\cot\left(\frac{\alpha}{2}|k_2|\right)-1\right).\nonumber
\end{eqnarray*}
Then, taking into account ({\bf ii}) and integrating over $k_2$ we conclude that
\begin{eqnarray}
x^i\star f &=&\left\{x^i+\frac{i\alpha}{2}\varepsilon^{ijk}x^k\partial_j\right.\\
&+&\left.
  (x^i\Delta-x^l\partial_l\partial_i)\Delta^{-1}\left[\frac{\alpha}{2}\sqrt{\Delta}\coth\left(\frac{\alpha}{2}\sqrt{\Delta}\right)-1\right]\right\}\triangleright f,\nonumber \label{poly}
\end{eqnarray}
where $\Delta=\partial_i\partial^i$.

For the Jacobiator one finds,
\begin{equation}\label{oct5}
[x^i,x^j,x^k]_\star=-3\alpha^2\varepsilon^{ijkl}x^l,
\end{equation}
which is in agreement with (\ref{oct3}).

\subsection{Integration} To introduce the trace functional (integration) on the algebra of the star product first we note that $  \partial_{i}\left( \varepsilon^{ijk}x^{k}  \right)  =0$, due to the antisymmetry of $\varepsilon^{ijk}$. Thus, for functions $f$ and $g$ vanishing on the infinity and the bracket (\ref{oct4}) one has, $\int  d^{7}x \{f,g\}=0$. That is, it is reasonable to take the canonical integration as the corresponding trace functional. However, the star product (\ref{w1}) is not closed,
 \begin{equation}\label{oct6}
 \int  d^{7}x [f\star g- f\cdot g]\neq0 .\end{equation}
To overcome this difficulty one may search for the gauge equivalent, in the sense (\ref{gauge}), star product which would be closed with respect to the introduced integration. In \cite{KV15} we did it for the star product corresponding to the algebra $\mathfrak{su}(2)$. The procedure of finding the gauge operator for the star product (\ref{w1}) is absolutely analogous to \cite{KV15}, except for some coefficients because of the different dimensions. Here we give only the final result. The star product
\begin{equation}\label{oct7}
f\circ g= \mathcal{D}^{-1}\left( \mathcal{D}f\star   \mathcal{D}g \right),
\end{equation}
where
\begin{equation}\label{oct8}
\mathcal{D}=\left( \frac{2\sinh\left(\frac{1}{2}\alpha\sqrt{\Delta}\right)}{\alpha\sqrt{\Delta}}\right)^{\frac{1}{3}},
\end{equation}
is closed with respect to the integration,
 \begin{equation}\label{oct9}
 \int  d^{7}x f\circ g= \int  d^{7}x f\cdot g .
 \end{equation}
By the construction, (\ref{oct7}) is alternative. Consequently, for this star product the integrated associator vanishes.

It is reasonable to ask here whether the non-geometric background described in this section can be obtained as a $T$-dualization of some topological background. Since we are talking about the star product corresponding to the algebra of $7$-dimensional imaginary octonions which in turn can be represented as a tangent space of the identity of the $7$-sphere, it is interesting to study if a $T$-dualization of a $7$-sphere with a constant three-form flux will result in the non-geometric space defined by (\ref{oct7}).

\section{Weyl star product}

In this section we discuss the nonassociative star products corresponding to the arbitrary (non)-Poisson bi-vector $P^{ij}(x)$. This approach is closer to the ideology
of quantum mechanics. We are able to construct
this product  for a completely general bi-vector $P^{ij}$. Besides, as it was shown in \cite{starpr},
the Weyl products have considerable computations advantages over other products.

Through the star product one may associate a (formal) differential operator $\hat f$ to a function $f$ as
\begin{equation}
(f\star g)(x)=\hat f \triangleright g(x)~,
\label{d3}
\end{equation}%
where the symbol $\triangleright$ on the right hand side means an action of a differential operator on
a function. In particular,
\begin{equation}\label{op1}
x^I\star f=\hat x^I\triangleright f(x)~,
\end{equation}%
To define the star product between two arbitrary functions, $f\star g$, let us introduce the notion of Weyl star product. If for any $f$ the operator $\hat f$ 
can be obtained by the Weyl symmetric ordering of operators $\hat x^j$, we call the start product (\ref{d3}) as the \emph{Weyl star product}. If $\tilde f(\xi)$ is a Fourier transform of $f$, then
\begin{equation}
\hat f=\hat f\left( \hat{x}\right) =W\left( f\right)  =\int \frac{d^{N}\xi}{\left( 2\pi \right) ^{N}}%
\tilde{f}\left( \xi\right) e^{-i\xi_{m}\hat{x}^{m}}.  \label{2}
\end{equation}
For example, $W(x^ix^j)=\tfrac 12 (\hat x^i\hat x^j + \hat x^j \hat x^i)$. 
Weyl star products satisfy
   \begin{equation}\label{weyl}
    (x^{i_1}\dots x^{i_n})\star f=\sum_{P_n} \frac 1{n!} P_n( x^{i_1}\star(\dots \star (x^{i_n}\star f)\dots)\,.
\end{equation}
where $P_n$ denotes a permutation of $n$ elements. It should be stressed that the correspondence $f\to \hat f$ is not an algebra representation. Since the star product that we
consider here does not need to be associative, in general $\widehat{ (f\star g)} \ne \hat f \circ \hat g$.

We will also need the following definitions. We call the star product {\it weakly Hermitean} if for all coordinates $x^j$,   
\begin{equation}\label{wher}
    (x^j\star f)^\ast=f^\ast \star x^j.
\end{equation}
The star product is {\it strictly triangular}, if the r.h.s. of
\begin{equation}\label{ST}
\frac{x^i\star x^j-x^j \star x^i}{2i\alpha}=P^{ij}(x),
\end{equation}
does not depend on $\alpha$.
The following condition,
\begin{equation}\label{su}
f(x)\star 1=1\star f(x)=f(x),
\end{equation}
is called the stability of unity. The conditions (\ref{weyl}) and (\ref{su}) together imply
  \begin{equation}\label{weyl1}
  \sum_{P_n} \frac 1{n!} P_n( x^{i_1}\star(\dots \star x^{i_n})\dots)=x^{i_1}\dots x^{i_n}\,.
\end{equation}

Now we are ready to formulate the following
\begin{theorem}
For any bi-vector field $P^{ij}(x)$ there exists unique weakly Hermitian strictly triangular Weyl star product satisfying the stability of
unity condition.
\end{theorem}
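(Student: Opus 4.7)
The plan is to reduce the construction of the full star product to that of the single operator $\hat x^i$, viewed as a formal power series in $\alpha$ with coefficients that are differential operators on $\mathcal{M}$. By the Weyl property and the Fourier formula (\ref{2}), $W(f)=\int\tilde f(\xi)\,e^{-i\xi_m\hat x^m}\,d^N\xi/(2\pi)^N$ is completely determined once the $\hat x^i$ are known, so $f\star g:=W(f)\triangleright g$ is fully fixed. The three required properties translate into conditions on the $\hat x^i$ alone: stability of unity $f\star 1=f$ becomes the Weyl symmetrization identity (\ref{weyl1}); strict triangularity becomes $\hat x^i\triangleright x^j-\hat x^j\triangleright x^i=2i\alpha P^{ij}(x)$ with no higher-order corrections; and weak Hermiticity becomes $(\hat x^j\triangleright f)^*=f^*\star x^j$, with the right side also computed through (\ref{2}).

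The construction is inductive in the order $r$ of $\alpha$. Writing $\hat x^i=x^i+\sum_{r\ge 1}(i\alpha)^r\hat X^i_r$, stability of unity at order $r$ removes any multiplicative (zeroth-order) piece of $\hat X^i_r$. At $r=1$, strict triangularity fixes the antisymmetric part of the first-derivative coefficient of $\hat X^i_1$ to be $P^{ij}$, while (\ref{weyl1}) at $n=2$ combined with weak Hermiticity forces the symmetric part to vanish and kills the higher-derivative coefficients, giving $\hat X^i_1=P^{ij}(x)\partial_j$. For $r\ge 2$, strict triangularity demands $\hat X^i_r\triangleright x^j$ symmetric in $(i,j)$; testing (\ref{weyl1}) on monomials of degree $n$ constrains the coefficients of derivatives of order $\le n-1$ in $\hat X^i_r$, and weak Hermiticity supplements this with reality conditions. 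Together these yield a linear system whose solution is uniquely determined by the lower-order data already constructed.

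The main obstacle is proving that this linear system is simultaneously consistent (not over-determined) and pins down every coefficient of $\hat X^i_r$. I would address it by stratifying $\hat X^i_r$ by differential order and matching the real/imaginary split of the unknowns with the corresponding split of the equations produced by weak Hermiticity, showing that at each derivative order the counts balance and the inhomogeneous pieces built from lower-order data lie in the image of the constraint map. This closure step is the delicate one and would follow the scheme developed for weakly Hermitian Weyl products in \cite{starpr}, adapted here to incorporate strict triangularity. Once $\hat x^i$ is so constructed, $W$ and hence $\star$ are defined by (\ref{2}); the three conditions hold by construction, and uniqueness follows from the step-wise uniqueness of the inductive determination of the $\hat X^i_r$.
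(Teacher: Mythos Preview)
Your proposal is correct and follows essentially the same approach as the paper. The paper itself does not give a self-contained proof here; it simply states that the proof is constructive and refers to \cite{starpr} for the recursion relations, then records the explicit formula (\ref{starw}) through third order. Your outline---reducing everything to the operators $\hat x^i$, expanding $\hat x^i=x^i+\sum_{r\ge1}(i\alpha)^r\hat X^i_r$, and determining $\hat X^i_r$ inductively from stability of unity, strict triangularity, and weak Hermiticity---is precisely the scheme of \cite{starpr}, and you correctly identify the closure of the linear system at each order as the non-trivial step. The only minor remark is that no ``adaptation to incorporate strict triangularity'' is needed: the theorem as stated, including strict triangularity, is exactly what \cite{starpr} proves.
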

The weak Hermiticity and the stability of unity are physical requirements. The Weyl condition reflects the particular choice of the quantization prescription. While, the requirement of strict triangularity (\ref{ST}) is necessary to provide the uniqueness of the star product. The proof of this statement is constructive, see \cite{starpr}. We present recursion relations that allow to compute this star product to any given order. In particular, up to the third order we have:
\begin{eqnarray}
&&(f\star g)(x)=f\cdot g+i\alpha P ^{ij}\partial
_{i}f\partial _{j}g  \label{starw} \\
&&-\frac{\alpha ^{2}}{2}P^{ij}P ^{kl}\partial _{i}\partial
_{k}f\partial _{j}\partial _{l}g-\frac{\alpha ^{2}}{3}P ^{ij}\partial
_{j}P ^{kl}\left( \partial _{i}\partial _{k}f\partial _{l}g-\partial
_{k}f\partial _{i}\partial _{l}g\right) \nonumber \\
&&-i\alpha^3 \left[
\frac{1}{3}P ^{nl}\partial _{l}P ^{mk}\partial _{n}\partial
_{m}P ^{ij}\left( \partial _{i}f\partial _{j}\partial _{k}g-\partial
_{i}g\partial _{j}\partial _{k}f\right) \right.  \nonumber \\
&&+ \frac{1}{6}  P^{nk}\partial_nP^{jm}\partial_mP^{il}
\left(\partial_i\partial_jf\partial_k\partial_lg-\partial_i\partial_jg\partial_k\partial_lf\right)   \nonumber \\
&&+\frac{1}{3}P ^{ln}\partial _{l}P ^{jm}P ^{ik}\left(
\partial _{i}\partial _{j}f\partial _{k}\partial _{n}\partial _{m}g-\partial
_{i}\partial _{j}g\partial _{k}\partial _{n}\partial _{m}f\right)
\nonumber \\
&&+\frac{1}{6}P ^{jl}P ^{im}P ^{kn}\partial _{i}\partial
_{j}\partial _{k}f\partial _{l}\partial _{n}\partial _{m}g  \nonumber \\
&&\left. +\frac{1}{6}P ^{nk}P ^{ml}\partial _{n}\partial _{m}P
^{ij}\left( \partial _{i}f\partial _{j}\partial _{k}\partial _{l}g-\partial
_{i}g\partial _{j}\partial _{k}\partial _{l}f\right)\right] +\mathcal{O}\left( \alpha ^{4}\right)~.  \nonumber
\end{eqnarray}
Up to this order the star product is hermitian.  For linear bi-vectors, $P^{ij}(x)=C^{ij}_kx^k$, the weakly hermitian Weyl star product is also hermitian in all orders.
 Also one may check that if $P^{ij}(x)$ is a Malcev-Poisson structure, the conditions (\ref{a1}) and (\ref{a2}) are satisfied, i.e., the star product is alternative.

\subsection{Properties of the Weyl star product}

\begin{proposition} Weakly Hermitean Weyl star product is weakly alternative, i.e, satisfies
\begin{equation}\label{j5}
 x^i\star(x^j\star x^k)-(x^i\star x^j)\star x^k= \frac{1}{6}[x^i,x^j,x^k]_\star
\end{equation}
 \end{proposition}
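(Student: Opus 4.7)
By the general identity (\ref{jass}), the left-hand side of (\ref{j5}) minus its right-hand side is precisely six times the totally antisymmetric part of $A(x^i, x^j, x^k)$ minus $A(x^i, x^j, x^k)$ itself. Equivalently, (\ref{j5}) holds if and only if $A(x^i, x^j, x^k)$ is totally antisymmetric in the three indices. My plan is therefore to show that $A$ is antisymmetric both under $i \leftrightarrow j$ and under $j \leftrightarrow k$; these two symmetries generate the full alternating group on three letters.

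For antisymmetry in the first two arguments, I would combine the two pieces of the Weyl assumption directly. The Weyl property (\ref{weyl}) with $n=2$ applied to $f = x^k$ gives
\begin{equation*}
(x^i x^j)\star x^k = \tfrac{1}{2}\bigl(x^i\star(x^j\star x^k) + x^j\star(x^i\star x^k)\bigr),
\end{equation*}
while the stability-of-unity consequence (\ref{weyl1}) with $n=2$ gives $x^i\star x^j + x^j\star x^i = 2x^i x^j$. Combining, $(x^i\star x^j + x^j\star x^i)\star x^k$ equals $x^i\star(x^j\star x^k) + x^j\star(x^i\star x^k)$, which is exactly $A(x^i,x^j,x^k) + A(x^j,x^i,x^k) = 0$.

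For antisymmetry in the last two arguments, the missing ingredient is a ``right'' analogue of (\ref{weyl}), namely $x^i\star(x^j x^k) = \tfrac{1}{2}\bigl((x^i\star x^j)\star x^k + (x^i\star x^k)\star x^j\bigr)$, and I would derive this by complex-conjugating the left version via weak Hermiticity (\ref{wher}). Concretely, for real functions $g$ the identity $(x^i\star f)^* = f^*\star x^i$ iterates to $(g\star x^i)^* = x^i\star g$ and, applied twice, $(x^a\star(x^b\star x^i))^* = (x^b\star x^i)^*\star x^a = (x^i\star x^b)\star x^a$. Taking the complex conjugate of the left Weyl formula for $(x^a x^b)\star x^i$ term by term then yields the displayed right-Weyl formula. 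Substituting into $A(x^i,x^j,x^k) + A(x^i,x^k,x^j) = 2x^i\star(x^j x^k) - (x^i\star x^j)\star x^k - (x^i\star x^k)\star x^j$ makes it vanish, establishing antisymmetry in $(j,k)$.

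The hard part, conceptually, is the second symmetry: the Weyl ordering assumption is stated asymmetrically (it symmetrizes the left factors in (\ref{weyl})), and only the weak Hermiticity condition lets one trade a left-nested product for a right-nested one. The key technical check is that weak Hermiticity, applied to $f = x^b\star x^i$ and to $g = x^a x^b$, propagates cleanly through the nested star products because both $x^i$ and $x^a x^b$ are real-valued; no assumption of full Hermiticity of $\star$ is needed. Once these two antisymmetries are in hand, total antisymmetry of $A(x^i,x^j,x^k)$ follows automatically, and (\ref{j5}) is an immediate consequence of (\ref{jass}).
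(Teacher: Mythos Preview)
Your proposal is correct and follows essentially the same route as the paper: establish antisymmetry of $A(x^i,x^j,x^k)$ in the first pair via the Weyl condition (\ref{weyl}) and stability of unity (\ref{weyl1}), then obtain antisymmetry in the last pair by complex-conjugating via weak Hermiticity (\ref{wher}), and finally invoke (\ref{jass}). The only cosmetic difference is that the paper conjugates the identity (\ref{j1}) directly to reach (\ref{j3}), whereas you first extract a ``right Weyl formula'' $x^i\star(x^jx^k)=\tfrac{1}{2}\bigl((x^i\star x^j)\star x^k+(x^i\star x^k)\star x^j\bigr)$ and then substitute; the computations are identical.
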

\begin{proof} For the Weyl star product one has by (\ref{weyl})
\begin{equation}\label{xxx}
  (x^ix^j)\star x^k=\frac{1}{2}\left(x^i\star(x^j\star x^k)+x^j\star(x^i\star x^k)\right).
\end{equation}
On the other hand the stability of the unity,
\begin{equation}
  \frac{1}{2}\left(x^i\star x^j+x^j\star x^i\right)=x^ix^j,
\end{equation}
implies
\begin{equation}\label{j1}
 \left(x^i\star x^j\right)\star x^k+\left(x^j\star x^i\right)\star x^k=  x^i\star(x^j\star x^k)+x^j\star(x^i\star x^k).
\end{equation}
Which means that
\begin{equation}\label{j2}
  A(x^i,x^j,x^k)+ A(x^j, x^i,x^k)=0,
\end{equation}
that is, the associator of coordinate functions $x^i$, $x^j$ and $x^k$ is antisymmetric in first two arguments. If the Weyl star product is weak-Hermitian, then considering complex conjugate of the equation (\ref{j1}) and 
using (\ref{wher}) we obtain
\begin{equation}\label{j3}
  x^k\star(x^j\star x^i)+x^k\star(x^i\star x^j)= \left(x^k\star x^j\right)\star x^i+\left(x^k\star x^i\right)\star x^j.
\end{equation}
That is, the associator of coordinate functions $x^k$, $x^j$ and $x^i$ is antisymmetric in the last two arguments,
\begin{equation}\label{j4}
  A(x^k,x^j,x^i)+ A(x^k, x^i,x^j)=0.
\end{equation}
Using now (\ref{j2}) and (\ref{j4}) in (\ref{jass}) we find (\ref{j5}). \end{proof}

The same way that the weak Hermiticity does not necessarily implies 
Hermiticity, the weakly alternative star product should not necessarily be alternative.  

\begin{proposition}  For Weyl star product satisfying the stability of unity for any $f$ holds:
\begin{equation}\label{j6}
A(x^{i_1}, x^{i_2}\dots x^{i_n},f)+A(x^{i_2}, x^{i_3}\dots x^{i_1},f)+\dots+ A(x^{i_n},x^{i_1}\dots x^{i_{n-1}},f)=0\,.
\end{equation}
\end{proposition}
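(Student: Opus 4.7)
My plan is to derive (\ref{j6}) from the Weyl symmetrization property (\ref{weyl}) by applying it in two complementary ways and subtracting. Let me write $X = x^{i_1}\cdots x^{i_n}$ for the ordinary pointwise product and $X_{\hat k} = \prod_{j\ne k} x^{i_j}$. Because classical multiplication is commutative, the second slot in each term of (\ref{j6}) is the same function $X_{\hat k}$, irrespective of the cyclic order written, so the claim is equivalent to
\begin{equation*}
\sum_{k=1}^n A(x^{i_k}, X_{\hat k}, f) = 0.
\end{equation*}

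First I would apply (\ref{weyl}) directly to $X\star f$ and stratify the sum over $S_n$ according to which index sits in the outermost (leftmost) slot. On the fiber $\{\sigma:\sigma(1)=k\}$ the inner sum over the remaining $(n-1)!$ permutations collapses, via a second application of (\ref{weyl}) now to the $n-1$ coordinates $\{x^{i_j}\}_{j\ne k}$, to $X_{\hat k}\star f$. After tallying the combinatorial factor $(n-1)!/n!=1/n$ this should yield
\begin{equation*}
X\star f = \frac{1}{n}\sum_{k=1}^n x^{i_k}\star(X_{\hat k}\star f).
\end{equation*}

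Next I would run exactly the same stratification on the identity (\ref{weyl1}) (equivalently, (\ref{weyl}) specialized to $f=1$ together with stability of unity), which should produce
\begin{equation*}
X = \frac{1}{n}\sum_{k=1}^n x^{i_k}\star X_{\hat k},
\qquad\text{hence}\qquad \sum_{k=1}^n x^{i_k}\star X_{\hat k} = nX.
\end{equation*}
Right-multiplying this equality by $\star f$ and using bilinearity of the star product then gives $\sum_k (x^{i_k}\star X_{\hat k})\star f = nX\star f$. Subtracting from $n$ times the previous display leaves
\begin{equation*}
\sum_{k=1}^n \bigl[x^{i_k}\star(X_{\hat k}\star f) - (x^{i_k}\star X_{\hat k})\star f\bigr] = nX\star f - nX\star f = 0,
\end{equation*}
which is precisely (\ref{j6}).

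I do not anticipate a real obstacle: the whole argument is a combinatorial rearrangement of symmetric sums, and the only step that requires care is the stratification itself — verifying that fibering $S_n$ by its leftmost letter genuinely reproduces, on each fiber, the symmetric sum over $S_{n-1}$ in the inner slots. Once this bookkeeping is checked, the identity follows in one subtraction, and it is noteworthy that associativity is nowhere used: the proof needs only the Weyl property and stability of unity, which is exactly why the proposition is stated for nonassociative Weyl star products.
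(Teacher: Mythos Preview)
Your proposal is correct and essentially identical to the paper's own proof: the paper derives (\ref{j7}) and (\ref{j8}) by the same stratification of the symmetric sum in (\ref{weyl}) and (\ref{weyl1}) according to the outermost coordinate, then subtracts (\ref{j9}) from (\ref{j7}) to obtain (\ref{j6}). Your write-up makes the fiberwise bookkeeping a bit more explicit, but the argument is the same.
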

\begin{proof} From one side, from (\ref{weyl}) one finds,
  \begin{eqnarray}\label{j7}
&&  n  (x^{i_1}\dots x^{i_n})\star f=\\&& x^{i_1}\star((x^{i_2}\dots x^{i_n})\star f)+x^{i_2}\star((x^{i_3}\dots x^{i_1})\star f)+\dots+ x^{i_n}\star ((x^{i_1}\dots x^{i_{n-1}})\star f)\,.\nonumber
\end{eqnarray}
At the same time, since
  \begin{equation}\label{j8}
  n  x^{i_1}\dots x^{i_n}=x^{i_1}\star(x^{i_2}\dots x^{i_n})+x^{i_2}\star(x^{i_3}\dots x^{i_1})+\dots+ x^{i_n}\star (x^{i_1}\dots x^{i_{n-1}})\,,
\end{equation}
due to (\ref{weyl1}), one has,
  \begin{eqnarray}\label{j9}
&&  n  (x^{i_1}\dots x^{i_n})\star f=\\&&(x^{i_1}\star(x^{i_2}\dots x^{i_n}))\star f+(x^{i_2}\star(x^{i_3}\dots x^{i_1}))\star f+\dots+ (x^{i_n}\star (x^{i_1}\dots x^{i_{n-1}}))\star f\,.\nonumber
\end{eqnarray}
Subtracting (\ref{j9}) from (\ref{j7}) one obtaines (\ref{j6}). \end{proof}

\begin{proposition} The alternative weakly Hermitean Weyl star product is Hermitean. \end{proposition}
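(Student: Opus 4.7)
My plan is to prove Hermiticity by induction on the polynomial degree. By conjugate-linearity and the standard formal-jet reduction of deformation quantization, it suffices to establish
\[
\bigl((x^{i_1}\cdots x^{i_m})\star f\bigr)^{\ast}=f^{\ast}\star (x^{i_1}\cdots x^{i_m})
\]
for every real monomial $p=x^{i_1}\cdots x^{i_m}$ of degree $m$ and every function $f$. The base case $m=1$ is precisely weak Hermiticity (\ref{wher}).

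For the inductive step, I would apply (\ref{j7}) to write $m\,(p\star f)=\sum_{k=1}^{m} x^{i_k}\star (Q_k\star f)$, where $Q_k$ is the degree-$(m-1)$ real monomial obtained by omitting $x^{i_k}$ from $p$. Taking complex conjugates, using weak Hermiticity on the outer factor and then the inductive hypothesis on $(Q_k\star f)^{\ast}$, yields $m\,(p\star f)^{\ast}=\sum_k (f^{\ast}\star Q_k)\star x^{i_k}$. On the other hand, (\ref{j8}) gives $mp=\sum_k x^{i_k}\star Q_k$, so $m\,(f^{\ast}\star p)=\sum_k f^{\ast}\star(x^{i_k}\star Q_k)$. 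The inductive step therefore reduces to the single identity
\[
\sum_{k=1}^{m}\Bigl[(f^{\ast}\star Q_k)\star x^{i_k}-f^{\ast}\star(x^{i_k}\star Q_k)\Bigr]=0,
\]
which I would rewrite termwise as $-A(f^{\ast},Q_k,x^{i_k})-f^{\ast}\star[x^{i_k},Q_k]$, splitting the sum into an associator part and a commutator part.

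The main obstacle -- the simultaneous vanishing of these two pieces -- is where alternativity enters essentially. The associator sum collapses because alternativity (\ref{alt}) makes $A$ totally antisymmetric, so up to sign it coincides with the cyclic identity (\ref{j6}) applied to $f^{\ast}$. The commutator sum vanishes by a second use of weak Hermiticity: taking complex conjugates of (\ref{j8}) and using $Q_k^{\ast}=Q_k$, $p^{\ast}=p$ produces $\sum_k Q_k\star x^{i_k}=mp=\sum_k x^{i_k}\star Q_k$, whence $\sum_k [x^{i_k},Q_k]=0$. Without alternativity the associator sum would not telescope through (\ref{j6}), which is precisely why the proposition requires the extra hypothesis of alternativity on top of weak Hermiticity in order to upgrade weak alternativity to genuine Hermiticity.
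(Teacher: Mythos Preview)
Your argument is correct and follows essentially the same inductive route as the paper: reduce to monomials, use (\ref{j7})/(\ref{j8}) to peel off one coordinate, and kill the resulting associator sum via total antisymmetry (alternativity) together with the cyclic identity (\ref{j6}). The paper packages the induction as proving the auxiliary identity (\ref{j10}) rather than the Hermiticity statement directly, but these are equivalent after iterating weak Hermiticity; your explicit splitting into an associator part and a commutator part, and your verification that $\sum_k[x^{i_k},Q_k]=0$ by conjugating (\ref{j8}) with (\ref{wher}), actually spells out a step (namely $\sum_k Q_k\star x^{i_k}=mp$) that the paper uses without comment.
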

\begin{proof} The weakly Hermitean Weyl star product is Hermitean if 
  \begin{equation}\label{j10}
    \sum_{P_n} \frac 1{n!} P_n(\dots(f\star x^{i_n})\star\dots)\star x^{i_1})=f\star  (x^{i_1}\dots x^{i_n})\,.
\end{equation}
Let us prove by the induction that if the star product is also alternative then (\ref{j10}) holds true. For $n=2$,
\begin{eqnarray*}
&&(f\star x^i)\star x^j+(f\star x^j)\star x^i=\\&&f\star(x^i\star x^j)+f\star(x^j\star x^i)-A(f,x^i,x^j)-A(f,x^j,x^i)=2f\star(x^ix^j).
\end{eqnarray*}
Suppose that (\ref{j10}) holds for $n=k$ and let us prove it for $n=k+1$. We write,
 \begin{eqnarray*}
   && \sum_{P_{k+1}} \frac 1{(k+1)!} P_{k+1}(\dots(f\star x^{i_{k+1}})\star\dots)\star x^{i_1})=\\&&
   \frac{1}{k+1}\left[(f\star  (x^{i_1}\dots x^{i_k}))\star x^{i_{k+1}}+\dots+(f\star  (x^{i_{k+1}}\dots x^{i_{k-1}}))\star x^{i_{k}}\right]=\\&&
     \frac{1}{k+1}\left[f\star(  (x^{i_1}\dots x^{i_k})\star x^{i_{k+1}})+\dots+f\star ( (x^{i_{k+1}}\dots x^{i_{k-1}})\star x^{i_{k}})\right.\\&&
     -A(f,x^{i_1}\dots x^{i_k},x^{i_{k+1}})-\dots-A(f,x^{i_{k+1}}\dots x^{i_{k-1}},x^{i_{k}})]=
     f\star(x^{i_1}\dots x^{i_{k+1}}) \,.
\end{eqnarray*}
In the last line we have used the equation (\ref{j6}) and antisymmetry of the associator. \end{proof}

As it was already mentioned the main advantage to work with weak Hermitean Weyl star product is the existence \cite{starpr} of an explicit recursive formulae for its construction to any desired order for the arbitrary antisymmetric bi-vector $P^{ij}(x)$. The above proposition shows that if in addition the star product is alternative, then it is Hermitean. The Hermiticity of star products is essential for physical applications. 

Applying our procedure \cite{starpr} to construct the Weyl star product representing quantization of a Poisson bi-vector $\mathcal{P}^{ij}(x)$ one will not necessary end up with the associative star product. To make it associative one needs to introduce the corrections in $\alpha$ (renormalization) of the given bi-vector $\mathcal{P}^{ij}(x)$,  \begin{equation*}\mathcal{P}^{ij}\rightarrow \mathcal{P}_r^{ij}=\mathcal{P}^{ij}+\alpha^2\mathcal{P}^{ij}_2+\mathcal{O}\left( \alpha ^{4}\right),\end{equation*} see  \cite{KV} for details and for the construction of first nontrivial correction $\mathcal{P}^{ij}_2(x)$. For linear Poisson structure no corrections are needed and the Weyl star product constructed according to \cite{starpr} is associative.

The exemple considered in the previous section indicates that the similar situation takes place in case of the alternative star products representing quantization of linear Malcev-Poisson structure. 
In general, we expect that the renormalization of the Malcev-Poisson structure $P^{ij}(x)$ will make the corresponding Weyl star product alternative. 

\section{Trace functional}

In physics beside the star product one also needs the trace functional (integration) on the corresponding algebra of functions. The existence of the trace functional compatible with the associative star product is known as the Connes-Flato-Strenheimer conjecture. It was proven in \cite{FS}. Namely, it was shown that for any Poisson bi-vector $\mathcal{P}^{ij}(x)$ there exists a closed star product, i.e., the one satisfying
\begin{equation}\label{trace}
    \int f\star g=   \int  f\cdot g,
\end{equation}
where the integration is performed with a volume form ${\bf\Omega}$, such that $\mathrm{div}_{\bf\Omega}\mathcal{P}=0$. To find the star product satisfying (\ref{trace}) one may start with some known expression, e.g., the associative Weyl star product \cite{KV}, and then use the gauge freedom in its definition (\ref{gauge}) to obtain the desirable star product. The perturbative procedure of the construction of gauge operator $\mathcal{D}$ relating the Weyl star product and the closed one for a given Poisson structure $\mathcal{P}^{ij}(x)$ was proposed in \cite{Kup15}.

In the Sec. 4, we saw that the same idea \cite{Kup15} is also working in the case of the alternative star product. Using the gauge transformation (\ref{oct8}) it was constructed the closed alternative star product (\ref{oct7}) representing the algebra of imaginary octonions. Now let us consider  the star product corresponding to the arbitrary Malcev-Poisson structure $P^{ij}(x)$. To define the integration measure $\mu(x)\neq0$, we impose the condition
\begin{equation}\label{measure}
 \partial_i\left(\mu\cdot P^{ij}\right)=0 .\end{equation}
 In this case, the integrated bracket of two Schwartz functions vanishes, $\int \{f,g\}=0$, implying that 
 \begin{equation}\label{n2}
 \int  d^{N}x\mu [f\star g- f\cdot g]=\mathcal{O}\left( \alpha ^{2}\right) .\end{equation}
In open string theory one may choose, $\mu=\sqrt{\det(g+\mathcal{F})}$, as a Born-Infeld measure.In this case, (\ref{measure}) are the equations of motion for the corresponding gauge potencial  \cite{MK2, Blum}. The problem is that for the Weyl star product (\ref{starw}), already in the second order, $\mathcal{O}\left( \alpha ^{2}\right)$, the l.h.s. of (\ref{n2}) does not vanish. To give the precise expression of (\ref{n2}) we note that because of the identities,
\begin{equation*}\int d^{N}x\partial _{i}f\partial _{l}\left( \mu
\Pi ^{ilk}\right) \partial _{k}g=0,\,\,\,\,\partial _{l}\left( \mu
P ^{lj}\partial _{j}P ^{ki}\right) =0,\end{equation*}
one has,
\begin{equation*}
\int d^{N}x\partial _{i}f\partial _{l}\left( \mu
P ^{ij}\partial _{j}P ^{lk}\right) \partial _{k}g=\int d^{N}x\partial _{i}g\partial _{l}\left( \mu
P ^{ij}\partial _{j}P ^{lk}\right) \partial _{k}f,
\end{equation*}
i.e., the matrix 
\begin{equation}\label{n4}
b^{ik}(x)=\partial _{i}f\partial _{l}\left( \mu P ^{ij}\partial _{j}P ^{lk}\right) \partial _{k}g
\end{equation} 
is symmetric in the indices $i,k$ up to the surface terms. Now we find,
\begin{equation}\label{n5}
 \int  d^{N}x\mu [f\star g- f\cdot g] =-\frac{\alpha ^{2}}{6}\int d^{N}x\partial _{i}fb^{ik} \partial _{k}g+\mathcal{O}\left( \alpha ^{3}\right) .
\end{equation}
To fix it, we will follow \cite{Kup15} and look for the gauge transformation $\mathcal{D}:\star\rightarrow \circ$, mapping the given star product $\star$ to the closed one $\circ$. Introducing the gauge operator 
\begin{equation*}
\mathcal{D}=1-\frac{\alpha ^{2}}{12\mu }b^{ik}\partial _{i}\partial
_{k}+\mathcal{O}\left( \alpha ^{3}\right),
\end{equation*}
one obtains the star product
\begin{eqnarray}\label{circ}
f\circ g&=& \mathcal{D}^{-1}\left( \mathcal{D}f\star   \mathcal{D}g \right)\\
&=& f\star  g +\frac{\alpha ^{2}}{12\mu}\partial _{l}\left( \mu P ^{ij}\partial
_{j}P ^{lk}\right)\left[\partial _{i}f\partial
_{k}g+\partial _{i}g\partial_{k}f\right]+\mathcal{O}\left( \alpha ^{3}\right).\nonumber
\end{eqnarray}
which is closed up to the third order in $\alpha$:
\begin{equation*}
 \int  d^{N}x\mu [f\circ g- f\cdot g]=\mathcal{O}\left( \alpha ^{3}\right) ~.
\end{equation*}
Since, the new star product (\ref{circ}) is gauge equivalent to the alternative one (\ref{starw}), it is also alternative. Hence, due to (\ref{alt}) for this star product holds,
\begin{equation}\label{trace2}
    \int (f\circ g)\circ h=   \int  f\circ (g\circ h).
\end{equation}
{\it For the alternative closed star product the integrated associator vanishes.} This result is of great importance in physics, see e.g. \cite{MK2, Blum}. In particular, it means that on-shell, i.e. if the string equation of motion are satisfied, there will be no violation of associativity in string scattering amplitudes.

Another important property of the alternative star products is the Artin's theorem, which states that in an alternative algebra the subalgebra generated by any two elements is associative. That is, expressions (power products) involving only two functions, say $f$ and $g$, can be written unambiguously without parentheses. For the product of three elements, like $f\star g\star f$, this statement just reflects the alternative identities (\ref{a1}), (\ref{a2}) and (\ref{a3}). For the product of more then three elements the statement can be proved by induction. Consequently, alternative algebras are power-associative, that is, the subalgebra generated by a single element is associative. This means that if a function $f$ is multiplied by itself several times, it doesn't matter in which order the multiplications are carried out, so for instance
\begin{equation*}
 f\star(f\star(f\star f)) = (f\star(f\star f))\star f = (f\star f)\star(f\star f)=f\star f\star f\star f.
 \end{equation*}
These properties are essential for the consistent definition of the non-associative field theories. In particular, for the closed alternative star product using the Artin's theorem one may show that,
\begin{equation*}
\delta\left(\int V_\star(\Phi)\right)=\int\delta\Phi\cdot V^\prime_\star(\Phi),
\end{equation*}
where the symbol $V_\star(\Phi)$ means that in the function $V(\Phi)$ the ordinary multiplication was substituted by the star products. Variation of the interaction term is calculated in the same manner as in the standard non-commutative theory.

\section{Conclusions}

In this manuscript we have shown that the requirement of the associativity of star products can be relaxed in such a way that new star product will satisfy an important for physical applications properties, like the Moufang identities, alternative identities, etc. We substitute the condition of the associativity by the requirement (\ref{alt}) meaning that the star multiplication should be alternative. The condition (\ref{alt}) is gauge invariant from the point of view of deformation quantization. We show that in this case an integrated associator vanishes, which can be used to show that the non-associativity will not manifest itself on the physical observables. 

At the same time, the condition (\ref{alt}) implies the restrictions on the classical bracket $\{f,g\}=P^{ij}(x)\partial_i f\partial_jg$, it should be a Malcev-Poisson bracket (\ref{m3}). We discuss physical meaning of this restriction and show that on the Malcev-Poisson manifold there is a weaker form of the classical Poisson theorem, which can be used for the construction of integrals of motion.

Finally we note that there are some different approaches in treating the non-associative systems within the framework of deformation quantization. For exemple, one may require the associativity of the star product only for the physical observables \cite{LSh}, or introduce new elements to the original algebra which will make it associative \cite{MSS1,Ho}. 

The parts of this work were presented on the conference Non-commutative field theory and quantum gravity, Corfu, Greece, 2015, and submitted to the proceedings of Corfu Summer Institute \cite{PoS}.

\section*{Acknowledgements}
I appreciate the valuable remarks of Ivan Shestakov and Richard Szabo, I am also grateful to Jim Stasheff for correspondence. This work was supported in part by FAPESP, projects 2014/03578-6 and 2016/04341-5, and CNPq, project 443436/2014-2.


\begin{thebibliography}{99}

\bibitem{Jackiw}
R.~Jackiw,
  ``3 - Cocycle in Mathematics and Physics,''
  Phys.\ Rev.\ Lett.\  {\bf 54} (1985) 159.
  %%CITATION = PRLTA,54,159;%%

\bibitem{BaLu}
I.~Bakas and D.~L\"ust,
  ``3-Cocycles, Non-Associative Star-Products and the Magnetic Paradigm of $R$-Flux String Vacua,''
  JHEP {\bf 1401} (2014) 171
  [arXiv:1309.3172 [hep-th]].
  %%CITATION = ARXIV:1309.3172,;%%

\bibitem{Lust1}
R.~Blumenhagen and E.~Plauschinn,
  ``Nonassociative Gravity in String Theory?,''
  J.\ Phys.\ A {\bf 44} (2011) 015401
  [arXiv:1010.1263 [hep-th]].
  %%CITATION = ARXIV:1010.1263;%%

\bibitem{Lust2}
D.~L\"ust,
  ``T-duality and closed string non-commutative (doubled) geometry,''
  JHEP {\bf 1012} (2010) 084
  [arXiv:1010.1361 [hep-th]].
  %%CITATION = ARXIV:1010.1361;%%
  
\bibitem{Lust3}
R.~Blumenhagen, A.~Deser, D.~L\"ust, E.~Plauschinn and F.~Rennecke,
  ``Non-geometric Fluxes, Asymmetric Strings and Nonassociative Geometry,''
  J.\ Phys.\ A {\bf 44} (2011) 385401
  [arXiv:1106.0316 [hep-th]].
  %%CITATION = ARXIV:1106.0316;%%
  
 \bibitem{MSS2}
D.~Mylonas, P.~Schupp and R.~J.~Szabo,
  ``Membrane Sigma-Models and Quantization of Non-Geometric Flux Backgrounds,''
  JHEP {\bf 1209} (2012) 012
  [arXiv:1207.0926 [hep-th]].
  %%CITATION = ARXIV:1207.0926;%%
  
 \bibitem{Andriot} 
  D.~Andriot, M.~Larfors, D.~Lust and P.~Patalong,
  ``(Non-)commutative closed string on T-dual toroidal backgrounds,''
  JHEP {\bf 1306}, 021 (2013)
   [arXiv:1211.6437 [hep-th]].

\bibitem{ChuHo}
C.~-S.~Chu and P.~-M.~Ho,
  ``Noncommutative open string and D-brane,''
  Nucl.\ Phys.\ B {\bf 550} (1999) 151
  [hep-th/9812219].
  %%CITATION = HEP-TH/9812219;%%

\bibitem{Schom}
V.~Schomerus,
  ``D-branes and deformation quantization,''
  JHEP {\bf 9906} (1999) 030
  [hep-th/9903205].
  %%CITATION = HEP-TH/9903205;%%

\bibitem{CoSch}
L.~Cornalba and R.~Schiappa,
  ``Nonassociative star product deformations for D-brane world volumes in curved backgrounds,''
  Commun.\ Math.\ Phys.\  {\bf 225} (2002) 33
  [hep-th/0101219].
  %%CITATION = HEP-TH/0101219;%%

\bibitem{MK1}
M.~Herbst, A.~Kling and M.~Kreuzer,
  ``Star products from open strings in curved backgrounds,''
  JHEP {\bf 0109} (2001) 014
  [hep-th/0106159].
  %%CITATION = HEP-TH/0106159;%%

\bibitem{MK2}
M.~Herbst, A.~Kling and M.~Kreuzer,
  ``Cyclicity of nonassociative products on D-branes,''
  JHEP {\bf 0403} (2004) 003
  [hep-th/0312043].
  %%CITATION = HEP-TH/0312043;%
  
\bibitem{MSS1} 
  D.~Mylonas, P.~Schupp and R.~J.~Szabo,
  ``Non-Geometric Fluxes, Quasi-Hopf Twist Deformations and Nonassociative Quantum Mechanics,''
  J.\ Math.\ Phys.\  {\bf 55}, 122301 (2014)
  [arXiv:1312.1621 [hep-th]].
  
  \bibitem{Bojowald1}
  M.~Bojowald, S.~Brahma, U.~Buyukcam and T.~Strobl,
  ``States in non-associative quantum mechanics: Uncertainty relations and semiclassical evolution,''
  JHEP {\bf 1503} (2015) 093
  [arXiv:1411.3710 [hep-th]].
  %%CITATION = ARXIV:1411.3710;%%

\bibitem{Bojowald2} 
  M.~Bojowald, S.~Brahma and U.~Buyukcam,
  ``Testing Nonassociative Quantum Mechanics,''
  Phys.\ Rev.\ Lett.\  {\bf 115}, no. 22, 220402 (2015)
   [arXiv:1510.07559 [quant-ph]].
   
\bibitem{GAS1} 
  G.~E.~Barnes, A.~Schenkel and R.~J.~Szabo,
  ``Nonassociative geometry in quasi-Hopf representation categories I: Bimodules and their internal homomorphisms,''
  J.\ Geom.\ Phys.\  {\bf 89}, 111 (2014)
   [arXiv:1409.6331 [math.QA]].
   
\bibitem{GAS2} 
  G.~E.~Barnes, A.~Schenkel and R.~J.~Szabo,
  ``Nonassociative geometry in quasi-Hopf representation categories II: Connections and curvature,''
  J.\ Geom.\ Phys.\  {\bf 106}, 234 (2016)
  [arXiv:1507.02792 [math.QA]].
  
\bibitem{Fuchs} 
  R.~Blumenhagen and M.~Fuchs,
  ``Towards a Theory of Nonassociative Gravity,''
  arXiv:1604.03253 [hep-th].
  
   \bibitem{BFFLS}
F.~Bayen, M.~Flato, C.~Fronsdal, A.~Lichnerowicz, D.~Sternheimer,
  \textit{Deformation Theory and Quantization. 1. Deformations of Symplectic Structures,}
  Annals Phys.\  {\bf 111} (1978) 61.
  %%CITATION = APNYA,111,61;%%
  
 \bibitem{starpr} 
  V.~G.~Kupriyanov and D.~V.~Vassilevich,
  ``Nonassociative Weyl star products,''
  JHEP {\bf 1509}, 103 (2015)
  [arXiv:1506.02329 [hep-th]].
  
 \bibitem{Kontsevich} M.~Kontsevich, \textit{Deformation quantization of Poisson manifolds,}
Lett.\ Math.\ Phys.\ \textbf{66} (2003) 157.

\bibitem{Ivan-book} K.A. Zhevlakov, A.M. Slinko, I.P. Shestakov, A.I. Shirshov, \textit{Rings that are nearly associative}, Academic Press, 1982.

\bibitem{Gunaydin}
  M.~Günaydin and D.~Minic,
  ``Nonassociativity, Malcev Algebras and String Theory,''
  Fortsch.\ Phys.\  {\bf 61}, 873 (2013)
  [arXiv:1304.0410 [hep-th]].

\bibitem{Ivan} I.P. Shestakov, \textit{Speciality problem for Malcev algebras and deformations of Poisson Malcev algebras.} In: IV International Conference on Nonassociative Algebra and its Applications, 2000, São Paulo, Brasil. Nonassociative Algebra and its Applications. NY: Marcel Dekker, 1998. p. 365-371. 

\bibitem{KV}
V.~G.~Kupriyanov, D.~V.~Vassilevich,
  \textit{Star products made (somewhat) easier,}
  Eur.\ Phys.\ J.\ C {\bf 58} (2008) 627.

\bibitem{FS} G. Felder, B. Shoikhet, \textit{Deformation Quantization with Traces,} Lett. Math. Phys. \textbf{53} (2000)
75.
  

\bibitem{FL} L.~Freidel and E.~R.~Livine,
  ``Ponzano-Regge model revisited III: Feynman diagrams and effective field theory,''
  Class.Quant.Grav. {\bf 23} (2006) 2021
  [hep-th/0502106].

\bibitem{FM}
  L.~Freidel and S.~Majid,
  ``Noncommutative harmonic analysis, sampling theory and the Duflo map in 2+1 quantum gravity,''
  Class.\ Quant.\ Grav.\  {\bf 25} (2008) 045006
  [hep-th/0601004].

 \bibitem{OR13} C.~Guedes, D.~Oriti and M.~Raasakka,
 ``Quantization maps, algebra representation and non-commutative Fourier transform for Lie groups,''
 J.\ Math.\ Phys.\  {\bf 54} (2013) 083508.
 [arXiv:1301.7750 [math-ph]].

\bibitem{KV15} V.G. Kupriyanov, P. Vitale, \textit{Noncommutative $\mathbb{R}^d$ via closed star product,} JHEP \textbf{1508} (2015) 024.

\bibitem{Kup15} V.G. Kupriyanov, \textit{Hydrogen atom on curved noncommutative space,} J.Phys. A: Math. Theor. \textbf{46} (2013) 245303.

	 
  \bibitem{Blum} 
  R.~Blumenhagen, M.~Fuchs, F.~Haßler, D.~Lüst and R.~Sun,
  ``Non-associative Deformations of Geometry in Double Field Theory,''
  JHEP {\bf 1404}, 141 (2014)
   [arXiv:1312.0719 [hep-th]].  

  \bibitem{LSh} 
  S.~L.~Lyakhovich and A.~A.~Sharapov,
  ``BRST theory without Hamiltonian and Lagrangian,''
  JHEP {\bf 0503}, 011 (2005)
  [hep-th/0411247].
  
 \bibitem{Ho} P.~M.~Ho,
  ``Making nonassociative algebra associative,''
  JHEP {\bf 0111}, 026 (2001)
   [hep-th/0103024].
   
 \bibitem{PoS} 
  V.~G.~Kupriyanov,
  ``Alternative multiplications and non-associativity in physics,''
  PoS CORFU {\bf 2015}, 129 (2015)
  [arXiv:1603.00218 [hep-th]].

\end{thebibliography}
\end{document}